\DeclareMathOperator{\Hom}{Hom}
\DeclareMathOperator{\Com}{Com}
\DeclareMathOperator{\II}{I}
\DeclareMathOperator{\1}{id}
\newcommand{\NN}{\mathbb{N}}
\newcommand{\RR}{\mathbb{R}}
\newcommand{\EEnd}{\mathcal End}
\newcommand{\EE}{\mathcal E}
\newcommand{\bul}{\bullet}
\renewcommand{\=}{:=}
\renewcommand{\t}{\otimes}
\newcommand{\al}{\alpha}
\newcommand{\be}{\beta}
\newtheorem{thm}{Theorem}[section]
 \newtheorem{lemma}[thm]{Lemma}
\theoremstyle{definition}
 \newtheorem{defn}[thm]{Definition}
\theoremstyle{definition}
 \newtheorem{exam}[thm]{Example}
\theoremstyle{definition}
\numberwithin{equation}{section}
\begin{document}
\title{\LARGE\bf  2D binary operadic Lax representation\\ for harmonic oscillator}
\date{}
\author{\Large Eugen Paal and J\"{u}ri Virkepu\\ \\
Department of Mathematics, Tallinn University of Technology\\
Ehitajate tee 5, 19086 Tallinn, Estonia\\ \\
E-mails: eugen.paal@ttu.ee and jvirkepu@staff.ttu.ee}
\maketitle
\thispagestyle{empty}
\begin{abstract}
It is explained how the time evolution of operadic variables may be introduced by using the operadic Lax equation. As an example, a $2$-dimensional binary operadic Lax representation for the  harmonic oscillator  is 
constructed.
\par\smallskip
{\bf 2000 MSC:} 18D50, 70G60
\end{abstract}

\section{Introduction}

In the Hamiltonian formalism, a mechanical system is described by canonical variables $q^i,p_i$ and their time evolution is prescribed by the Hamiltonian system
\begin{equation}
\label{ham}
\dfrac{dq^i}{dt}=\dfrac{\partial H}{\partial p_i},
\quad
\dfrac{dp_i}{dt}=-\dfrac{\partial H}{\partial q^i}
\end{equation}
By a Lax representation \cite{Lax68,BBT03} of a mechanical system one means such a pair $(L,M)$ of matrices (linear operators) $L,M$ that the above Hamiltonian system may be represented as the Lax equation
\begin{equation}
\label{lax}
\dfrac{dL}{dt}=[M,L]\= ML-LM
\end{equation}
Thus, from the algebraic point of view, mechanical systems can be described by linear operators, i.e by  linear maps $V\to V$  of a vector space $V$. As a generalization of this one can pose the following question \cite{Paal07}: 
how to describe the time evolution of the linear operations (multiplications) $V^{\t n}\to V$? 

The algebraic operations (multiplications) can be seen as an example of the \emph{operadic} variables \cite{Ger,GGS92,KP,KPS}. 
If an operadic system depends on time one can speak about \emph{operadic dynamics} \cite{Paal07}.
The latter may be introduced by simple and natural analogy with the Hamiltonian dynamics.
In particular, the time evolution of operadic variables may be given by the operadic Lax equation.
In \cite{PV07} it was shown how the dynamics may be introduced in a 2-dimensional Lie algebra. 
In the present paper, a $2$-dimensional binary operadic Lax representation for the  harmonic oscillator  is constructed. 

\section{Operad}

Let $K$ be a unital associative commutative ring, and let $C^n$ ($n\in\NN$) be unital $K$-modules. For $f\in C^n$, we refer to $n$ as the \emph{degree} of $f$ and often write (when it does not cause confusion) $f$ instead of $\deg f$. For example, $(-1)^f\= (-1)^n$, $C^f\= C^n$ and $\circ_f\= \circ_n$. Also, it is convenient to use the \emph{reduced} degree $|f|\= n-1$. Throughout this paper, we assume that $\t\= \t_K$.

\begin{defn}[operad (e.g \cite{Ger,GGS92})]
A linear (non-symmetric) \emph{operad} with coefficients in $K$ is a sequence $C\= \{C^n\}_{n\in\NN}$ of unital
$K$-modules (an $\NN$-graded $K$-module), such that the following conditions hold:
\begin{enumerate}
\item[(1)]
For $0\leq i\leq m-1$ there exist the \emph{partial compositions}
\[
  \circ_i\in\Hom(C^m\t C^n,C^{m+n-1}),\quad |\circ_i|=0
\]
\item[(2)]
For all $h\t f\t g\in C^h\t C^f\t C^g$,
the \emph{composition (associativity) relations} hold,
\[
(h\circ_i f)\circ_j g=
\begin{cases}
    (-1)^{|f||g|} (h\circ_j g)\circ_{i+|g|}f
                       &\text{if $0\leq j\leq i-1$},\\
    h\circ_i(f\circ_{j-i}g)  &\text{if $i\leq j\leq i+|f|$},\\
    (-1)^{|f||g|}(h\circ_{j-|f|}g)\circ_i f
                       &\text{if $i+f\leq j\leq|h|+|f|$}.
\end{cases}
\]
\item[(3)]
A unit $\II\in C^1$ exists such that
\[
\II\circ_0 f=f=f\circ_i \II,\quad 0\leq i\leq |f|
\]
\end{enumerate}
\end{defn}

In the second item, the \emph{first} and \emph{third} parts of the
defining relations turn out to be equivalent.

\begin{exam}[endomorphism operad \cite{Ger}]
\label{HG} Let $V$ be a unital $K$-module and
$\EE_V^n\= {\EEnd}_V^n\= \Hom(V^{\t n},V)$. Define the partial compositions
for $f\t g\in\EE_V^f\t\EE_V^g$ as
\[
f\circ_i g\= (-1)^{i|g|}f\circ(\1_V^{\t i}\t g\t\1_V^{\t(|f|-i)}),
         \quad 0\leq i\leq |f|
\]
Then $\EE_V\= \{\EE_V^n\}_{n\in\NN}$ is an operad (with the unit $\1_V\in\EE_V^1$) called the
\emph{endomorphism operad} of $V$.
\end{exam}
Therefore, algebraic operations can be seen as elements of the endomorphism operad.
Just as elements of a vector space are called \emph{vectors},  it is natural to call elements of an abstract operad \emph{operations}. 

\section{Gerstenhaber brackets and operadic Lax equation}

\begin{defn}[total composition \cite{Ger,GGS92}]
The \emph{total composition} $\bul\:C^f\t C^g\to C^{f+|g|}$ is defined in an operad $C$  by
\[
f\bul g\= \sum_{i=0}^{|f|}f\circ_i g \quad \in C^{f+|g|},
\quad |\bul|=0
\]
The pair $\Com C\= \{C,\bul\}$ is called the \emph{composition algebra} of $C$.
\end{defn}

\begin{defn}[Gerstenhaber brackets \cite{Ger,GGS92}]
The  \emph{Gerstenhaber brackets} $[\cdot,\cdot]$ are defined in $\Com C$ as a graded commutator by
\[
[f,g]\= f\bul g-(-1)^{|f||g|}g\bul f=-(-1)^{|f||g|}[g,f],\quad|[\cdot,\cdot]|=0
\]
\end{defn}

The \emph{commutator algebra} of $\Com C$ is denoted as $\Com^{-}\!C\= \{C,[\cdot,\cdot]\}$.
One can prove that $\Com^-\!C$ is a \emph{graded Lie algebra}. The Jacobi
identity reads
\[
(-1)^{|f||h|}[[f,g],h]+(-1)^{|g||f|}[[g,h],f]+(-1)^{|h||g|}[[h,f],g]=0
\]

Assume that $K\= \RR$ and operations are differentiable.
Dynamics in operadic systems (operadic dynamics) may be introduced by

\begin{defn}[operadic Lax pair \cite{Paal07}]
Allow a classical dynamical system to be described by the Hamiltonian system (\ref{ham}).
An \emph{operadic Lax pair} is a pair $(L,M)$ of the homogeneous operations $L,M\in C$,
such that the Hamiltonian system  (\ref{ham}) may represented as the \emph{operadic Lax equation}
\[
\dfrac{dL}{dt}=[M,L]\= M\bul L-(-1)^{|M||L|}L\bul M
\]
Evidently, the degree constraints $|L|=|M|=0$ give rise to ordinary Lax equation (\ref{lax}) \cite{Lax68,BBT03}.
\end{defn}

\section{Operadic harmonic oscillator}

Consider the Lax pair for the harmonic oscillator:
\[
L=\begin{pmatrix}
p&\omega q\\
\omega q &-p
\end{pmatrix},
\quad
M=\frac{\omega}{2}
\begin{pmatrix}
0&-1\\
1&0
\end{pmatrix}
\]
Since the Hamiltonian is
\[
H(q,p)=\frac{1}{2}(p^2+\omega^2q^2)
\]
it is easy to check that the Lax equation
\[
\dot{L}=[M,L]\=  ML - LM
\]
is equivalent to the Hamiltonian system
\begin{equation}
\label{ham_osc}
\dfrac{dq}{dt}=\dfrac{\partial H}{\partial p}=p,
\quad
\dfrac{dp}{dt}=-\dfrac{\partial H}{\partial q}=-\omega^2q
\end{equation}
If $\mu$ is a homogeneous operadic variable one can use the above Hamilton equations to obtain
\[
\dfrac{d\mu}{dt}
=\dfrac{\partial\mu}{\partial q}\dfrac{dq}{dt}+\dfrac{\partial\mu}{\partial p}\dfrac{dp}{dt}
=p\dfrac{\partial\mu}{\partial q}-\omega^2q\dfrac{\partial\mu}{\partial p}
=[M,\mu]
\]
Therefore, we get the following linear partial differential equation for the operadic variable $\mu(q,p)$:
\[
p\dfrac{\partial\mu}{\partial q}-\omega^2q\dfrac{\partial\mu}{\partial p}=M\bul\mu- \mu\bul M
\]
By integrating one gains sequences of operations  called the \emph{operadic (Lax representations for) harmonic oscillator}.

\section{Main example}

Let $A\= \{V,\mu\}$ be a  binary algebra with an operation $xy\= \mu(x\t y)$.
We require that $\mu=\mu(q,p)$ so that $(\mu,M)$ is an operadic Lax pair, i.e the operadic Lax equation
\[
\dot{\mu}=[M,\mu]\=  M\bul\mu-\mu\bul M,\quad |\mu|=1,\quad |M|=0
\]
is equivalent to the Hamiltonian system of the harmonic oscillator.

Let $x,y\in V$. Assuming that $|M|=0$ and $|\mu|=1$, one has
\begin{align*}
M\bul\mu
&=\sum_{i=0}^0(-1)^{i|\mu|}M\circ_i\mu
=M\circ_0\mu=M\circ\mu\\
\mu\bul M &=\sum_{i=0}^1(-1)^{i|M|}\mu\circ_i M =\mu\circ_0
M+\mu\circ_1 M=\mu\circ(M\t\1_V)+\mu\circ(\1_V\t M)
\end{align*}
Therefore, one has
\[
\dfrac{d}{dt}(xy)=M(xy)-(Mx)y-x(My)
\]
Let $\dim V=n$.
In a basis $\{e_1,\ldots,e_n\}$ of $V$,  the structure constants $\mu_{jk}^i$ of $A$ are defined by
\[
\mu(e_j\t e_k)\=  \mu_{jk}^i e_i,\quad j,k=1,\ldots,n
\]
In particular,
\[
\dfrac{d}{dt}(e_je_k)=M(e_je_k)-(Me_j)e_k-e_j(Me_k)
\]
By denoting $Me_i\=  M_i^se_s$, it follows that
\[
\dot{\mu}_{jk}^i=\mu_{jk}^sM_s^i-M_j^s\mu_{sk}^i-M_k^s\mu_{js}^i,\quad i,j,k=1,\ldots, n
\]
In particular, one has

\begin{lemma}
\label{lemma:first}
Let $\dim V=2$ and
$
M\= (M_j^i)\= 
\frac{\omega}{2}
\left(
\begin{smallmatrix}
0&-1\\
1&0
\end{smallmatrix}
\right)
$.
Then the $2$-dimensional binary operadic Lax equations read
\[
\begin{cases}
\dot{\mu}_{11}^{1}=-\frac{\omega}{2}\left(\mu_{11}^{2}+\mu_{12}^{1}+\mu_{21}^{1}\right),\quad
\dot{\mu}_{11}^{2}=\frac{\omega}{2}\left(\mu_{11}^{1}-\mu_{12}^{2}-\mu_{21}^{2}\right)\\
\dot{\mu}_{12}^{1}=-\frac{\omega}{2}\left(\mu_{12}^{2}-\mu_{11}^{1}+\mu_{22}^{1}\right),\quad
\dot{\mu}_{12}^{2}=\frac{\omega}{2}\left(\mu_{12}^{1}+\mu_{11}^{2}-\mu_{22}^{2}\right)\\
\dot{\mu}_{21}^{1}=-\frac{\omega}{2}\left(\mu_{21}^{2}-\mu_{11}^{1}+\mu_{22}^{1}\right),\quad
\dot{\mu}_{21}^{2}=\frac{\omega}{2}\left(\mu_{21}^{1}+\mu_{11}^{2}-\mu_{22}^{2}\right)\\
\dot{\mu}_{22}^{1}=-\frac{\omega}{2}\left(\mu_{22}^{2}-\mu_{12}^{1}-\mu_{21}^{1}\right),\quad
\dot{\mu}_{22}^{2}=\frac{\omega}{2}\left(\mu_{22}^{1}+\mu_{12}^{2}+\mu_{21}^{2}\right)\\
\end{cases}
\]
\end{lemma}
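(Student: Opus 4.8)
The plan is to treat the lemma as a direct specialization of the general component evolution equation
\[
\dot\mu_{jk}^i = \mu_{jk}^s M_s^i - M_j^s\mu_{sk}^i - M_k^s\mu_{js}^i
\]
derived immediately above the statement, to the case $n=2$ with the prescribed matrix $M$. No integration or structural argument is needed; the content is a finite, organised substitution.

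First I would read off the entries of $M$. With the convention $Me_i=M_i^s e_s$, the symbol $M_j^i$ denotes the entry in row $i$ and column $j$, so the given $M=\frac{\omega}{2}\left(\begin{smallmatrix}0&-1\\1&0\end{smallmatrix}\right)$ has exactly two nonzero entries, $M_1^2=\frac{\omega}{2}$ and $M_2^1=-\frac{\omega}{2}$, together with $M_1^1=M_2^2=0$.

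Next I would exploit this sparseness to collapse the three contractions over the dummy index $s\in\{1,2\}$ in the evolution equation. Term by term one finds that $\mu_{jk}^s M_s^i$ reduces to $-\frac{\omega}{2}\mu_{jk}^2$ for $i=1$ and to $\frac{\omega}{2}\mu_{jk}^1$ for $i=2$; that $M_j^s\mu_{sk}^i$ reduces to $\frac{\omega}{2}\mu_{2k}^i$ for $j=1$ and to $-\frac{\omega}{2}\mu_{1k}^i$ for $j=2$; and, symmetrically, that $M_k^s\mu_{js}^i$ reduces to $\frac{\omega}{2}\mu_{j2}^i$ for $k=1$ and to $-\frac{\omega}{2}\mu_{j1}^i$ for $k=2$.

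Finally I would enumerate the $2^3=8$ index triples $(i,j,k)$ with $i,j,k\in\{1,2\}$ and, for each, assemble the first collapsed term minus the second minus the third, collecting the common factor $\frac{\omega}{2}$. This yields precisely the eight displayed equations. The only point demanding care is bookkeeping: maintaining one fixed row/column convention throughout and correctly propagating the three minus signs, namely the intrinsic sign of $M_2^1$ and the two overall subtractions in the evolution equation. I do not expect any genuine obstacle beyond this clerical vigilance, since the claim is mechanical once $M$ is inserted.
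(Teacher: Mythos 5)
Your proposal is correct and matches the paper's own treatment: the paper states the lemma as an immediate specialization of the preceding component equation $\dot{\mu}_{jk}^i=\mu_{jk}^sM_s^i-M_j^s\mu_{sk}^i-M_k^s\mu_{js}^i$ to $n=2$ with the given $M$, exactly as you do. Your reading of the entries ($M_1^2=\frac{\omega}{2}$, $M_2^1=-\frac{\omega}{2}$, consistent with $Me_i=M_i^se_s$) and the resulting collapsed contractions reproduce all eight displayed equations.
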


For the harmonic oscillator, define its auxiliary functions $A_\pm$ and $D_\pm$ by
\begin{equation}
\label{def_A}
\begin{cases}
A_+^2+A_-^2=2\sqrt{2H}\\
A_+^2-A_-^2=2p\\
A_+A_-=\omega q\\
\end{cases},\quad
\begin{cases}
D_+\= \frac{A_+}{2}(A_+^2-3A_-^2)\\
D_-\= \frac{A_-}{2}(3A_+^2-A_-^2)\\
\end{cases}
\end{equation}
Differentiating the defining relations (\ref{def_A}) of $A_{\pm}$  with respect to $t$ one gets
 \begin{equation}
\label{diff_A}
\begin{cases}
A_+\dot{A}_++A_-\dot{A}_-=\frac{1}{\sqrt{2H}}(p\dot{p}+\omega^{2}q\dot{q})\\
A_+\dot{A}_+-A_-\dot{A}_-=\dot{p}\\
A_-\dot{A}_++A_+\dot{A}_-=\omega \dot{q}
\end{cases}
\end{equation}
Now one can propose
\begin{thm}
Let $C_{\nu}\in\mathbb{R}$ ($\nu=1,\ldots,8$) be arbitrary real--valued parameters,
$
M\= 
\frac{\omega}{2}
\left(
\begin{smallmatrix}
0&-1\\
1&0
\end{smallmatrix}
\right)
$
and
\[
\begin{cases}
\mu_{11}^{1}(q,p)=\hphantom{-}C_5A_-+C_6A_++C_7D_-+C_8D_+\\
\mu_{12}^{1}(q,p)=\hphantom{-}C_1A_++C_2A_--C_7D_++C_8D_-\\
\mu_{21}^{1}(q,p)=-C_1A_+-C_2A_--C_3A_+-C_4A_--C_5A_++C_6A_--C_7D_++C_8D_-\\
\mu_{22}^{1}(q,p)=-C_3A_-+C_4A_+-C_7D_--C_8D_+\\
\mu_{11}^{2}(q,p)=\hphantom{-}C_3A_++C_4A_--C_7D_++C_8D_-\\
\mu_{12}^{2}(q,p)=\hphantom{-}C_1A_--C_2A_++C_3A_--C_4A_++C_5A_-+C_6A_+-C_7D_--C_8D_+\\
\mu_{21}^{2}(q,p)=-C_1A_-+C_2A_+-C_7D_--C_8D_+\\
\mu_{22}^{2}(q,p)=-C_5A_++C_6A_-+C_7D_+-C_8D_-\\
\end{cases}
\]
Then $(\mu,M)$ is a $2$-dimensional binary operadic Lax pair of the harmonic oscillator.
\end{thm}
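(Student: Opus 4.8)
The plan is to reduce the statement to the verification of the eight scalar ordinary differential equations collected in Lemma \ref{lemma:first}. By that lemma, the pair $(\mu,M)$ is a $2$-dimensional binary operadic Lax pair precisely when the structure constants $\mu_{jk}^i(q,p)$ satisfy that system along the trajectories of the harmonic oscillator. Since the proposed $\mu_{jk}^i$ are expressed as fixed constant-coefficient linear combinations of the four functions $A_\pm$ and $D_\pm$, it suffices to compute the time derivatives of these four functions and then, for each of the eight equations, to match the coefficients of $A_+,A_-,D_+,D_-$ on both sides.

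First I would establish clean evolution laws for the auxiliary functions. Along the flow $H$ is conserved, so $p\dot p+\omega^2 q\dot q=0$ and the first relation of (\ref{diff_A}) degenerates to $A_+\dot A_++A_-\dot A_-=0$. Feeding the Hamiltonian equations (\ref{ham_osc}) into the remaining two relations of (\ref{diff_A}) gives a linear system for $\dot A_\pm$ whose solution, after using $A_+A_-=\omega q$, is
\[
\dot A_+=-\tfrac{\omega}{2}A_-,\qquad \dot A_-=\tfrac{\omega}{2}A_+.
\]
Differentiating the definitions (\ref{def_A}) of $D_\pm$ by the chain rule and substituting these two formulas, the cubic expressions collapse and one finds
\[
\dot D_+=-\tfrac{3\omega}{2}D_-,\qquad \dot D_-=\tfrac{3\omega}{2}D_+.
\]
Thus $A_\pm$ rotate with angular frequency $\omega/2$ and $D_\pm$ with $3\omega/2$; equivalently, $A_++iA_-$ and $D_++iD_-=\tfrac{1}{2}(A_++iA_-)^3$ rotate uniformly, which is the quickest way to see the $D_\pm$ laws.

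With these four derivative formulas in hand, the verification is direct. For each of the eight equations in Lemma \ref{lemma:first} I would differentiate the corresponding $\mu_{jk}^i$ term by term using the displayed derivative laws, and independently expand the right-hand combination $\pm\tfrac{\omega}{2}(\cdots)$ of the relevant $\mu$'s. Because every $\mu_{jk}^i$ is a constant-coefficient combination of $A_+,A_-,D_+,D_-$, both sides become such combinations, and equality reduces to four scalar identities in the parameters $C_1,\dots,C_8$ per equation. For instance, in the first equation the left side produces $\tfrac{\omega}{2}(C_5A_+-C_6A_-)+\tfrac{3\omega}{2}(C_7D_+-C_8D_-)$, while the sum $\mu_{11}^2+\mu_{12}^1+\mu_{21}^1$ collapses to $-C_5A_++C_6A_--3C_7D_++3C_8D_-$, so that $-\tfrac{\omega}{2}$ times it reproduces the left side exactly; the remaining seven equations behave the same way.

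The conceptual steps---conservation of $H$ and the resulting evolution laws for $A_\pm$ and $D_\pm$---are short; the main labour is purely organizational, namely the eightfold bookkeeping of matching the coefficients of $A_\pm$ and $D_\pm$. The only place demanding genuine care is the computation of $\dot D_\pm$, where one must carry out the cubic chain rule and then recognize the resulting combination of $A_+^3,A_+^2A_-,A_+A_-^2,A_-^3$ as exactly $\mp\tfrac{3\omega}{2}$ times $D_\mp$; once that simplification is secured, the eight verifications are routine linear algebra in the $C_\nu$.
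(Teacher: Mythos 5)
There is a genuine gap: you prove only one implication of an equivalence, and the paper's notion of an operadic Lax pair demands both. By the Definition of an operadic Lax pair and the opening of Section 5, $(\mu,M)$ is a $2$-dimensional binary operadic Lax pair iff the operadic Lax equation $\dot\mu=[M,\mu]$ --- i.e.\ the eight scalar equations of Lemma \ref{lemma:first} --- is \emph{equivalent} to the Hamiltonian system (\ref{ham_osc}), not merely a consequence of it. Your opening reduction (``the pair $(\mu,M)$ is a Lax pair precisely when the structure constants satisfy that system along the trajectories of the harmonic oscillator'') misreads this: holding along Hamiltonian trajectories is exactly the easy direction. Your forward computation is correct --- conservation of $H$, the laws $\dot A_\pm=\mp\tfrac{\omega}{2}A_\mp$ and $\dot D_\pm=\mp\tfrac{3\omega}{2}D_\mp$, and the coefficient matching all agree with the first halves of the paper's equivalences $(I)$ and $(II)$, and your observation that $D_++iD_-=\tfrac{1}{2}(A_++iA_-)^3$ rotates uniformly is a genuinely slick shortcut for the $D_\pm$ laws --- but it establishes only that the Hamiltonian system implies the operadic Lax equations.

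What is missing is the converse: assuming the eight equations of Lemma \ref{lemma:first} hold, deduce $\dot q=p$ and $\dot p=-\omega^2q$. This is where the arbitrariness of the $C_\nu$ does real work in the paper's proof: writing the Lax system as $C_\beta\Gamma_\alpha^\beta=0$ with $G_\pm^{\omega/2}\=\dot A_\pm\pm\tfrac{\omega}{2}A_\mp$ and $G_\pm^{3\omega/2}\=\dot D_\pm\pm\tfrac{3\omega}{2}D_\mp$, the paper uses the fact that the $C_\beta$ are arbitrary to force $\Gamma=0$, i.e.\ $G_\pm^{\omega/2}=0=G_\pm^{3\omega/2}$, and then runs the argument backwards: from $G_\pm^{\omega/2}=0$ and the differentiated defining relations (\ref{diff_A}) one recovers $\dot p=-\omega A_+A_-=-\omega^2q$ and $\dot q=\tfrac{1}{2}(A_+^2-A_-^2)=p$. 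Note that without invoking the arbitrariness of the parameters the converse can actually fail: for the admissible choice $C_1=\cdots=C_8=0$ one has $\mu\equiv0$ and the Lax equation is vacuous, implying nothing about $q$ and $p$. So to complete your argument you must add (a) the observation that, the $C_\nu$ being arbitrary, the eight Lax equations are equivalent to the four conditions $G_\pm^{\omega/2}=0=G_\pm^{3\omega/2}$, and (b) the backward derivation of (\ref{ham_osc}) from $G_\pm^{\omega/2}=0$ via (\ref{def_A})--(\ref{diff_A}); the paper's additional Cramer-rule argument (that $G_\pm^{3\omega/2}=0$ alone already forces $G_\pm^{\omega/2}=0$) is stronger than strictly needed, but some version of (a) and (b) is indispensable.
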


\begin{proof}[Proof]
Denote
\[
\begin{cases}
G_{\pm}^{\omega/2}&\= \dot{A}_{\pm}\pm\frac{\omega}{2}A_{\mp}\\
G_{\pm}^{3\omega/2}&\= \dot{D}_{\pm}\pm\frac{3\omega}{2}D_{\mp}\\
\end{cases}
\]
Define the matrix
\[
\Gamma
=(\Gamma_{\al}^{\be})\=\begin{pmatrix}
                                   0 & \hphantom{-}G_+^{\omega/2} & -G_+^{\omega/2} & 0 & 0 & \hphantom{-}G_-^{\omega/2} & -G_-^{\omega/2} & 0 \\
                                   0 & \hphantom{-}G_-^{\omega/2} & -G_-^{\omega/2} & 0 & 0 & -G_+^{\omega/2} & \hphantom{-}G_+^{\omega/2} & 0 \\
                                   0 & 0 & -G_+^{\omega/2} & -G_-^{\omega/2} & \hphantom{-}G_+^{\omega/2} & \hphantom{-}G_-^{\omega/2} & 0 & 0 \\
                                   0 & 0 & -G_-^{\omega/2} & \hphantom{-}G_+^{\omega/2} & \hphantom{-}G_-^{\omega/2} & -G_+^{\omega/2} & 0 & 0 \\
                                   G_-^{\omega/2} & 0 & -G_+^{\omega/2} & 0 & 0 & \hphantom{-}G_-^{\omega/2} & 0 & -G_+^{\omega/2} \\
                                   G_+^{\omega/2} & 0 & \hphantom{-}G_-^{\omega/2} & 0 & 0 & \hphantom{-}G_+^{\omega/2} & 0 & \hphantom{-}G_-^{\omega/2} \\
                                   G_-^{3\omega/2} & -G_+^{3\omega/2} & -G_+^{3\omega/2} & -G_-^{3\omega/2} & -G_+^{3\omega/2} & -G_-^{3\omega/2} & -G_-^{3\omega/2} & \hphantom{-}G_+^{3\omega/2} \\
                                   G_+^{3\omega/2} & \hphantom{-}G_-^{3\omega/2} & \hphantom{-}G_-^{3\omega/2} & -G_+^{3\omega/2} & \hphantom{-}G_-^{3\omega/2} & -G_+^{3\omega/2} & -G_+^{3\omega/2} & -G_-^{3\omega/2} \\
                                 \end{pmatrix}
\]
Then, by using Lemma \ref{lemma:first}, it follows that the $2$-dimensional binary operadic Lax equations read
\[
C_{\be}\Gamma_{\al}^{\be}=0,\quad \al=1,\ldots,8
\]
Since parameters $C_\be$ are arbitrary, the latter constraints imply  $\Gamma=0$.
Thus one has to consider the differential equations
\[
G_{\pm}^{\omega/2}=0=G_{\pm}^{3\omega/2}
\]
We show that
\[
\begin{cases}
\dot{p}=-\omega^{2}q\\
\dot{q}=p\\
\end{cases}\quad
\stackrel{(I)}{\Longleftrightarrow}\quad
 G_{\pm}^{\omega/2}=0 \quad
\stackrel{(II)}{\Longleftrightarrow}\quad G_{\pm}^{3\omega/2}=0
\]

First prove ($I$). 
$\Longrightarrow:$ Assume that the Hamilton equations (\ref{ham_osc}) for the harmonic oscillator hold. Then
it follows from (\ref{diff_A}) that
\begin{align*}
\begin{cases}
A_+\dot{A}_++A_-\dot{A}_-=0\\
A_+\dot{A}_+-A_-\dot{A}_-=-\omega^{2}q\\
A_-\dot{A}_++A_+\dot{A}_-=\omega p
\end{cases}
\quad&\Longleftrightarrow\quad
\begin{cases}
2A_-\dot{A}_-=\hphantom{-}\omega^{2}q\\
2A_+\dot{A}_+=-\omega^{2}q\\
A_-\dot{A}_++A_+\dot{A}_-=\omega p
\end{cases}\\
\quad&\Longleftrightarrow\quad
\begin{cases}
\dot{A}_-=\frac{\omega^{2}q}{2A_-}=\frac{\omega^{2}qA_+}{2A_-A_+}=\hphantom{-}\frac{\omega}{2}A_+\\
\dot{A}_+=\frac{-\omega^{2}q}{2A_+}=\frac{-\omega^{2}qA_-}{2A_+A_-}=-\frac{\omega}{2}A_-\\
A_+^{2}-A_-^{2}=2p
\end{cases}\\
\quad&\Longleftrightarrow\quad G_{\pm}^{\omega/2}=0
\end{align*}
and the latter the required system for $A_{\pm}$.
\newline
$\Longleftarrow:$ Assume that the differential equations $G_{\pm}^{\omega/2}=0$ hold. Then it follows from (\ref{diff_A}) that
\begin{align*}
\begin{cases}
A_-A_+-A_+A_-=\frac{2(p\dot{p}+\omega^{2}q\dot{q})}{\omega\sqrt{2H}}\\
A_+A_-+A_-A_+=-\frac{2}{\omega}\dot{p}\\
A_+^{2}-A_-^{2}=2\dot{q}
\end{cases}
\quad&\Longleftrightarrow\quad
\begin{cases}
p\dot{p}+\omega^{2}q\dot{q}=0\\
A_+A_-=-\frac{1}{\omega}\dot{p}\\
A_+^{2}-A_-^{2}=2\dot{q}
\end{cases}\\
\quad&\Longleftrightarrow\quad
\begin{cases}
p\dot{p}+\omega^{2}q\dot{q}=0\\
\dot{p}=-\omega A_+A_-=-\omega^{2}q\\
\dot{q}=\frac{1}{2}(A_+^{2}-A_-^{2})=p\\
\end{cases}
\end{align*}
where the first relation easily follows from the Hamiltonian system (\ref{ham_osc}).

Now prove ($II$).  Differentiate the auxiliary functions $D_{\pm}$ to get
\[
\begin{cases}
\dot{D}_+=\frac{1}{2}\dot{A}_+(A_+^{2}-3A_-^{2})+A_+(A_+\dot{A}_+-3A_-\dot{A}_-)\\
\dot{D}_-=\frac{1}{2}\dot{A}_-(3A_+^{2}-A_-^{2})+A_-(3A_+\dot{A}_+-A_-\dot{A}_-)
\end{cases}
\]
$\Longrightarrow:$ Assume that functions $A_{\pm}$ satisfy the differential equations $G_{\pm}^{\omega/2}=0$. Then
\[
\begin{cases}
\dot{D}_+=-\frac{\omega}{4}A_-(A_+^{2}-3A_-^{2})-\frac{A_+\omega}{2}(A_+A_-+3A_-A_+)\\
\dot{D}_-=\hphantom{-}\frac{\omega}{4}A_+(3A_+^{2}-A_-^{2})-\frac{A_-\omega}{2}(3A_+A_-+A_-A_+)
\end{cases}
\]
and
\[
\begin{cases}
\dot{D}_+=-\frac{3\omega}{2}\frac{A_-}{2}(3A_+^{2}-A_-^{2})=-\frac{3\omega}{2}D_-\\
\dot{D}_-=\hphantom{-}\frac{3\omega}{2}\frac{A_+}{2}(A_+^{2}-3A_-^{2})=\hphantom{-}\frac{3\omega}{2}D_+
\end{cases}
\quad\Longleftrightarrow\quad G_{\pm}^{3\omega/2}=0
\]
$\Longleftarrow:$ Assume that functions $D_{\pm}$ satisfy the differential equations $G_{\pm}^{3\omega/2}=0$. Then
\begin{align*}
&\begin{cases}
-\frac{3\omega}{2}D_-=\frac{\dot{A}_+}{2}(A_+^{2}-3A_-^{2})+A_+(A_+\dot{A}_+-3A_-\dot{A}_-)\\
\hphantom{-}\frac{3\omega}{2}D_+=\frac{\dot{A}_-}{2}(3A_+^{2}-A_-^{2})+A_-(3A_+\dot{A}_+-A_-\dot{A}_-)
\end{cases}\\
&\Longleftrightarrow
\begin{cases}
\dot{A}_+(3A_+^{2}-3A_-^{2})+\dot{A}_-(-6A_-A_+)=-3\omega D_-\\
\dot{A}_+(6A_+A_-)+\dot{A}_-(3A_+^{2}-3A_-^{2})=3\omega D_+
\end{cases}
\\
&\Longleftrightarrow
\begin{cases}
\hphantom{\omega}p\dot{A}_+-\omega q\dot{A}_-=-\frac{\omega}{2}D_-\\
\omega
q\dot{A}_++\hphantom{\omega}p\dot{A}_-=\hphantom{-}\frac{\omega}{2}D_+
\end{cases}
\end{align*}
To use the Cramer formulae, calculate
\begin{align*}
\Delta &=\begin{vmatrix}
           p & -\omega q \\
           \omega q & p \\
         \end{vmatrix}
=p^{2}+\omega^{2}q^{2}=2H\\
\Delta_{\dot{A}_+}&=\begin{vmatrix}
                     -\frac{\omega}{2}D_- & -\omega q \\
                     \hphantom{-}\frac{\omega}{2}D_+ & p \\
                   \end{vmatrix}
=-\frac{\omega}{2}(D_-p-D_+\omega q)\\
\Delta_{\dot{A}_-}&=\begin{vmatrix}
                     p & -\frac{\omega}{2}D_- \\
                     \omega q & \hphantom{-}\frac{\omega}{2}D_+ \\
                   \end{vmatrix}
=\frac{\omega}{2}(D_+p+D_-\omega q)
\end{align*}
Note that
\begin{align*}
D_-p-D_+\omega
q&=\frac{A_-}{2}p(3A_+^{2}-A_-^{2})-\frac{A_+}{2}\omega
q(A_+^{2}-3A_-^{2})\\
&=\frac{A_-}{2}\frac{1}{2}(A_+^{2}-A_-^{2})(3A_+^{2}-A_-^{2})-\frac{A_+}{2}A_+A_-(A_+^{2}-3A_-^{2})\\
&=\frac{A_-}{4}(A_+^{2}+A_-^{2})^{2}=2A_-H\\
D_+p+D_-\omega
q&=\frac{A_+}{2}p(A_+^{2}-3A_-^{2})+\frac{A_-}{2}\omega
q(3A_+^{2}-A_-^{2})\\
&=\frac{A_+}{2}\frac{1}{2}(A_+^{2}-A_-^{2})(A_+^{2}-3A_-^{2})+\frac{A_-}{2}A_+A_-(3A_+^{2}-A_-^{2})\\
&=\frac{A_+}{4}(A_+^{2}+A_-^{2})^{2}=2A_+H\\
\end{align*}
Thus,
\[
\begin{cases}
\dot{A}_+=\frac{\Delta_{\dot{A}_+}}{\Delta}=-\frac{\omega}{2}\frac{2HA_-}{2H}=-\frac{\omega}{2}A_-\\
\dot{A}_-=\frac{\Delta_{\dot{A}_-}}{\Delta}=\hphantom{-}\frac{\omega}{2}\frac{2HA_+}{2H}=\hphantom{-}\frac{\omega}{2}A_+
\end{cases}
\quad\Longleftrightarrow\quad G_{\pm}^{\omega/2}=0 \qedhere
\]
\end{proof}

\section*{Acknowledgement}
Research was in part supported by the Estonian Science Foundation, Grant ETF 6912.


\begin{thebibliography}{99}
\itemsep-3pt

\bibitem{BBT03}
\newblock{O.~Babelon, D.~Bernard, and M.~Talon.}
\newblock{Introduction to Classical Integrable Systems.}
\newblock{Cambridge Univ. Press, 2003.}

\bibitem{Ger}
\newblock{M.~Gerstenhaber.}
\newblock{The cohomology structure of an associative ring.}
\newblock{Ann. of Math. {\bf 78} (1963), 267-288.}

\bibitem{GGS92}
\newblock{M.~Gerstenhaber, A.~Giaquinto, and S.~D.~Schack.}
\newblock{Algebras, bialgebras, quantum groups, and algebraic deformations.}
\newblock{Contemp. Math. {\bf 134} (1992), 51-92.}

\bibitem{KP}
\newblock{L. Kluge and E. Paal.}
\newblock{On derivation deviations in an abstract pre-operad.}
\newblock{Comm. Algebra, {\bf29} (2001), 1609-1626.}

\bibitem{KPS}
\newblock{L. Kluge, E. Paal, and J. Stasheff.}
\newblock{Invitation to composition.}
\newblock{Comm. Algebra, {\bf28} (2000), 1405-1422.}

\bibitem{Lax68}
\newblock{P.~D.~ Lax.}
\newblock{Integrals of nonlinear equations of evolution and solitary waves.}
\newblock{Comm. Pure Applied Math. {\bf21} (1968), 467-490.}

\bibitem{Paal07}
\newblock{E.~Paal.}
\newblock{Invitation to operadic dynamics.}
\newblock{J. Gen. Lie Theory Appl. {\bf1} (2007), 57-63.}

\bibitem{PV07}
\newblock{E.~Paal and J.~Virkepu.}
\newblock{Note on operadic harmonic oscillator.}
\newblock{Rep. Math. Phys. {\bf 61} (2008), 205-210.}

\end{thebibliography}
\end{document}